\title{Equilibria of the Colonel Blotto Games with Costs}
\author {
    Stanisław Kaźmierowski
}
\newtheorem{theorem}{Theorem}
\newtheorem{proposition}{Proposition}
\newtheorem{corollary}{Corollary}
\theoremstyle{definition}
\newtheorem{observation}{Observation}
\newtheorem{hypothesis}{Hypothesis}
\newtheorem{definition}{Definition}
\newtheorem{example}{Example}
\DeclareMathSymbol{\shortminus}{\mathbin}{AMSa}{"39}
\newcommand{\sign}{\mathrm{sign}}
\newcommand{\playerA}{\mathsf{A}}
\newcommand{\playerB}{\mathsf{B}}
\newcommand{\thisPlayer}{P}
\newcommand{\thatPlayer}{{\shortminus P}}
\newcommand{\examplePlayer}{\thisPlayer \!\in \!\left\{ \playerA, \playerB \right\}}
\newcommand{\thisRes}{D^\thisPlayer}
\newcommand{\aRes}{D^\playerA}
\newcommand{\bRes}{D^\playerB}
\newcommand{\fieldsSet}{\left[n\right]}
\newcommand{\pure}{\bm{s}}
\newcommand{\thisPure}{\pure^\thisPlayer}
\newcommand{\thatPure}{\pure^\thatPlayer}
\newcommand{\aPure}{\pure^\playerA}
\newcommand{\bPure}{\pure^\playerB}
\newcommand{\thisPureSet}{S^\thisPlayer}
\newcommand{\aPureSet}{S^\playerA}
\newcommand{\bPureSet}{S^\playerB}
\newcommand{\thisPayoff}{\Pi^\thisPlayer_{\$} \!}
\newcommand{\mixed}{\bm{\xi}}
\newcommand{\thisMixed}{\mixed^\thisPlayer}
\newcommand{\aMixed}{\mixed^\playerA}
\newcommand{\bMixed}{\mixed^\playerB}
\newcommand{\mixedProfile}{\left(\aMixed, \bMixed\right)}
\newcommand{\mixedPair}{\aMixed, \bMixed}
\newcommand{\thisMixedSet}{\Delta\left(\thisPureSet\right)}
\newcommand{\thisMixedPayoff}{\thisPayoff \mixedProfile}
\newcolumntype{b}{X}
\newcolumntype{s}{>{\hsize=.5\hsize}X}
\begin{document}

\maketitle

\begin{abstract}
This paper studies a generalized variant of the Colonel Blotto game, referred to as the Colonel Blotto game with costs. Unlike the classic Colonel Blotto game, which imposes the \textit{use-it-or-lose-it} budget assumption, the Colonel Blotto game with costs captures the strategic importance of costs related both to obtaining resources and assigning them across battlefields. We show that every instance of the Colonel Blotto game with costs is strategically equivalent to an instance of the zero-sum Colonel Blotto game with one additional battlefield. This enables the computation of Nash equilibria of the Colonel Blotto game with costs in polynomial time with respect to the game parameters: the number of battlefields and the number of resources available to the players.
\end{abstract}


\section{Introduction}

In the competitive landscape of the global electric vehicle (EV) market, Tesla and Volkswagen are two major players making strategic decisions about increasing and allocating production capacity across multiple key markets. Each market has unique characteristics, such as production costs, market potential, and regulations, that influence both the potential return on investment and the costs associated with expanding production capacity. Furthermore, the costs of increasing production capacity in any given market may differ among companies due to factors like existing infrastructure, supply chain efficiency, and local partnerships. This creates a dual challenge: first, deciding how much to invest in expanding overall production capacity, and second, determining how to allocate this capacity across different markets to maximize competitive advantage. Strategic planners must carefully weigh these considerations to decide how to globally increase and distribute production capacity. The central questions are: How should a company balance its investments in expanding capacity with the costs and benefits of allocating it to specific markets? What is the optimal strategy for assigning the increased capacity across markets with varying cost structures? Finally, what are the total costs that each company incurs in pursuing this global expansion?

Similarly, in the competitive pharmaceutical industry, companies like Pfizer and Johnson \& Johnson are engaged in a strategic race to develop new drugs. They must decide how much to invest in expanding their R\&D capacity---such as hiring scientists and expanding lab facilities ---- and how to allocate these resources to different research areas such as oncology, immunology, and neurology. Each area has its own set of challenges, including market potential, scientific difficulty, and regulatory requirements, which affect both the cost and the potential return on investment. In addition, the cost of conducting research in these areas can vary between companies due to factors such as existing expertise, access to specialized equipment, and previous partnerships with research institutions. The key strategic questions are: How should a company increase its R\&D capacity? How should it allocate this capacity across various research areas to balance risk, cost, and reward? What are the overall costs for each company in this R\&D competition?

Strategic decisions regarding the aforementioned scenarios share similar characteristics. First, they involve allocating resources across multiple contests, where strategic planners aim to maximize the sum of utilities across all contests. Second, both obtaining and assigning resources come at a cost. A strategic planner aiming to maximize the outcome of her decisions needs to consider both the anticipated outcomes of contests as well as the total costs incurred. The Colonel Blotto game~\cite{borel} is a well-known model that captures the strategic problem of allocating limited resources across multiple contests. However, in its classic formulation, the Colonel Blotto game imposes a sunk-cost assumption, i.e., the budgets of competing parties are fixed, and there are no costs related to either obtaining resources or assigning them to specific battlefields. Since the sunk-cost assumption does not capture the scenarios mentioned above, we study the problem of computing equilibrium strategies in a generalized variant of the Colonel Blotto game, called the Colonel Blotto game with costs.

\section{Colonel Blotto Game}
The Colonel Blotto game was first introduced by Borel in~\citeyear{borel}. In its original formulation, two colonels are tasked with simultaneously assigning their limited number of troops across the set of battlefields. Each colonel wins all the battlefields where he assigns strictly more resources than the opponent, and the remaining battlefields are tied. In the original formulation of the model, all battlefields are equally valuable, and the payoff to a colonel is the difference between the number of won and lost battlefields. Although the original motivation for introducing the Colonel Blotto game was military~\cite{borel, blotto_military_1, blotto_military_2}, the model has been since applied for analyzing competitions in various scenarios, such as politics~\cite{blotto_elections_3, blotto_elections_4, blotto_elections_5, blotto_elections_1}, network security~\cite{network_security_blotto_1, network_security_blotto_3, network_security_blotto_2, ChiaChuang11}, and marketing~\cite{blotto_marketing_3, blotto_marketing_2, blotto_marketing_1}. A simple description of this game combined with a rich structure and many possible applications resulted in the ongoing interest in solutions of the Colonel Blotto game of researchers in multiple areas. Examples of these fields are operations research~\cite{blotto_military_1, BealeHeselden62, operations_research_blotto, blotto_military_2}, economics~\cite{hart, cont_blotto_solved, kvasov, copula_1}, and computer science~\cite{winning_blotto_auditing, duels_to_battlefields_blotto, blotto_favourism, fast_2023_blotto}.

The first stream of research concerning solutions of the Colonel Blotto game aims to obtain (at least partial) characterization. Although in most formulations the game is zero-sum, the exponential number of strategies with respect to the model parameters makes the problem of characterization of optimal strategies hard, and most results were obtained under restrictive assumptions that narrow their possible applications. One of such restrictive assumptions is relaxing the integrality constraint of the problem, i.e., allowing the players to split their troops, resulting in the \textit{continuous} variant of the model. Working under this assumption, Borel and Ville~(\citeyear{blotto_3_battlefields}) solved the game for three battlefields and the same number of resources of both players. Their results were extended by Gross and Wagner~(\citeyear{blotto_Gross1950ACC}), who obtained a general solution under the symmetric budget of both players. Roberson~(\citeyear{cont_blotto_solved}) extended the knowledge on continuous variants of the game, providing solutions for the general case both in terms of the number of battlefields and asymmetric budgets of players. These results were further generalized by Kovenock and Roberson~(\citeyear{KovenockRoberson21}), who solved the variant allowing for different valuations across battlefields. 
As for the discrete variant of the problem, Hart~(\citeyear{hart}) provided partial characterization of solutions of the budget-symmetric model as well as for some of the budget-asymmetric cases for a general number of battlefields.
As for today, the general characterization of the equilibrium set in Colonel Blotto games with discrete strategy spaces has remained elusive. 

Difficulties in characterizing solutions of the more general variants of the Colonel Blotto game (e.g. different values across battlefields or different values of battlefields across players) resulted in increased attention regarding the problem of computing or approximating solutions of such models. Beale and Heselden~(\citeyear{BealeHeselden62}) introduced a method of approximating solutions of the Colonel Blotto game, based on a new game involving the mean numbers of forces allocated to each battlefield in the original game. A more generalized multiplier method, allowing for computing solutions of a broader class of games, including the Colonel Blotto game was introduced by Penn~(\citeyear{Penn71}). Combining the idea multiplier method with the fictitious play resulted in an implementation for computing the Colonel Blotto game solutions, described by Eisen and Le Mat~(\citeyear{EisenLeMat68}).

Recently, Ahmadinejad et al.~(\citeyear{duels_to_battlefields_blotto}) proposed a first method of computing solutions for the Colonel Blotto game in polynomial time with respect to the number of resources of both players and the number of battlefields. The proposed method utilized the idea of separation oracle and was proved applicable to a broader class of games satisfying the bilinearity of payoffs, i.e., the payoff to a player is the sum of payoffs across single battlefields. Although considered a breakthrough because of the guarantees of polynomial runtime, the method proposed by Ahmadinejad et al.~(\citeyear{duels_to_battlefields_blotto}) uses the ellipsoid method, which makes it insufficient for larger model parameters. Behnezhad et al.~(\citeyear{fast_2023_blotto}) addressed this problem, providing a first polynomial-size LP formulation of the problem, resulting in a simpler and significantly faster algorithm. Furthermore, they showed that their LP representation is asymptotically tight. Their method, based on the idea of representing a mixed strategy of a player as a flow in the corresponding layered graph, is applicable to every zero-sum Colonel Blotto game satisfying the bilinearity of payoffs and the zero-sum property at every battlefield, and it is considered the state-of-the-art algorithm for computing solutions of the Colonel Blotto game. 

As for the variants of the Colonel Blotto game that allow for capturing the costs of either obtaining or assigning resources, Roberson and Kvasov~(\citeyear{kvasov}) considered a Colonel Blotto game with continuous resources and linear costs. They show that if the level of asymmetry between the players’ budgets is below a threshold, then there exists a one-to-one mapping between the solutions of the Colonel Blotto game with costs and the solutions of the Colonel Blotto game. We are not aware of any other results on solutions of the Colonel Blotto game with costs outside of the work of Roberson and Kvasov~(\citeyear{kvasov}).

\paragraph{Our contribution:} We study a generalized variant of the Colonel Blotto game, called the \textit{Colonel Blotto game with costs}, which captures both global costs of obtaining the resources as well as battlefield-specific costs of assigning resources to the given battlefield. Although the game in question is not zero-sum (or even strictly competitive), using the notion of strategical equivalence, we show that the Colonel Blotto game with costs satisfies the interchangeability of Nash equilibria. Furthermore, we prove that every instance of the Colonel Blotto game with costs is strategically equivalent to a corresponding variant of the standard Colonel Blotto game where players have fixed budgets. By applying the results of Behnezhad et al.~(\citeyear{fast_2023_blotto}) about solution computation in the standard Colonel Blotto games with fixed budgets, we show that the problem of computation of Nash equilibria of the Colonel Blotto game with costs can be expressed using a linear program with at most $\Theta (D^2 \cdot n)$ constraints and $\Theta (D^2 \cdot n)$ variables, where $n$ is the number of battlefields and $D$ is the maximum number of resources that any player is allowed to assign. Lastly, we present our experimental results and discuss their consistency with theoretical results for the continuous resources setting of Roberson and Kvasov~(\citeyear{kvasov}).

\section{Model}\label{sec:model} 

This section introduces the Colonel Blotto game with discrete resources and costs, a generalization of the classic Colonel Blotto game that captures the strategic costs of assigning resources over a set of battlefields. 
The Colonel Blotto game with costs models competition between two players denoted by $\playerA$ and $\playerB$. Each player has a number of available discrete resources, e.g. military units or coins, denoted by $\aRes$ and $\bRes$, respectively. Players compete on the set $\fieldsSet =\left \{1,2,\ldots\,n\right\}$ of $n$ battlefields, where $n \geq 2$. A strategy of player $P \in \{\playerA, \playerB\}$ is an assignment of at most $D^P$ resources across the set of battlefields. Formally, the set of strategies of player $\thisPlayer \in \left\{ \playerA, \playerB \right\}$ is
\begin{equation}
\label{pure_strategy}
    S(D^P, n) = \left\{ \pure \in \mathbb{N}^n : \sum^n_{i=1} s_i \leq \thisRes \right\}.
\end{equation}
Elements of $S(D^P, n)$ are partial assignments, as the players are not required to use all of their available resources. When the parameters are clear from the context, the set of partial assignments is denoted $S^P$.

Given a strategy profile $(\aPure, \bPure)$, the payoffs to the players from battlefield $i \in \left[n\right]$ are defined as
\begin{align*}
\label{pure_single_payoff}
    & u_i^{\playerA} \left(s^{\playerA}_i, s^{\playerB}_i\right) = \hphantom{-} v_i \left(s^{\playerA}_i, s^{\playerB}_i\right) -c_i^{\playerA}\left(s^{\playerA}_i\right), \\
    & u_i^{\playerB} \left(s^{\playerA}_i, s^{\playerB}_i\right) = - v_i\left(s^{\playerA}_i, s^{\playerB}_i\right) - c_i^{\playerB}\left(s^{\playerB}_i\right),
\end{align*}
where $v_i : \mathbb{N} \times \mathbb{N} \longrightarrow  \mathbb{R}$ is a valuation function associated with the battlefield $i$, and $c_i^{\playerA}, c_i^{\playerB}: \mathbb{N} \longrightarrow  \mathbb{R}$ are non-decreasing \textit{assignment cost functions} that capture the costs of assigning a given number of resources to the given battlefield by each of the players.

The payoff to player $\thisPlayer \in \left\{ \playerA, \playerB \right\}$ from strategy profile $(\bm{s}^P, \bm{s}^{\shortminus P})$ is
\begin{equation}
\label{pure_payoffs}
    \pi_{\$}^\thisPlayer \! \left( \thisPure, \thatPure \right) = \sum_{i=1}^n u_i^{P}\left(s^P_i, s^{\shortminus P}_i\right) - g^{\thisPlayer} \left(\sum_{i=1}^n s_i^{\thisPlayer}\right),
\end{equation}
where $g^{P}$ is a non-decreasing \textit{obtainment cost} function that describes the cost of obtaining the total number of resources assigned across all the battlefields. 

\begin{definition}[Colonel Blotto game with costs]
For a given tuple $(\aRes, \bRes, n, \bm{v}, \bm{c}^{\playerA},\bm{c}^{\playerB}, g^{\playerA}, g^{\playerB})$, a strategic game with strategy sets defined by \eqref{pure_strategy} and payoffs defined by \eqref{pure_payoffs} is denoted with $\mathcal{B}_{\$}(\aRes, \bRes, n, \bm{v}, \bm{c}^{\playerA},\bm{c}^{\playerB} , g^{\playerA}, g^{\playerB}) \equiv \mathcal{B}_{\$}$ and called the \textbf{Colonel Blotto game with costs}. 
\end{definition}

The players are allowed to make randomized choices. A \emph{mixed strategy} of player $\examplePlayer$ is a probability distribution on $S^P$. 
Given a non-empty set $X$, let $\Delta(X)$ denote the set of all probability distributions on~$X$. 
The expected payoff to player $\examplePlayer$ from a pair of mixed strategies $(\mixedPair) \in \Delta(S^{\playerA}) \times \Delta(S^{\playerB})$ is equal to
\begin{equation*}
    \thisMixedPayoff = \sum\limits_{(\bm{x}, \bm{z}) \in \aPureSet \times \bPureSet} \xi^\playerA_{\bm{x}} \xi^\playerB_{\bm{z}} \pi_{\$}^\thisPlayer \! \left(\bm{x}, \bm{z} \right),
\end{equation*}
where given the probability distribution $\xi^{P}$, $\xi^{P}_{\bm{x}}$ is the probability associated with pure strategy $\bm{x}$. We assume that the players make their decisions ``simultaneously'', i.e., each player chooses her strategy without observing the choice of the opponent. We are interested in mixed strategy Nash equilibria (MSNE) of the game, i.e., mixed strategy profiles $(\mixedPair) \in \Delta(S^{\playerA}) \times \Delta(S^{\playerB})$ such that no player can improve her expected payoff by changing her strategy unilaterally.  Formally, for each $\examplePlayer$ and all $\bm{\zeta} \in \thisMixedSet$,
\begin{equation*}
\thisMixedPayoff \geq \thisPayoff\left(\bm{\zeta},\bm{\xi}^\thatPlayer\right),    
\end{equation*}
where $\thatPlayer$ denotes the player other than $\thisPlayer$ and $(\bm{\zeta},\bm{\xi}^\thatPlayer)$ is the strategy profile
obtained from $(\mixedPair)$ by replacing $\thisMixed$ with $\bm{\zeta}$. 

\section{Analysis}
In this section, we present one of two key ideas of this paper, that is to introduce an artificial zero-sum game, $\mathcal{B}_{0}$, that, given the same set of parameters, has the same set of Nash equilibria as the Colonel Blotto game with costs, $\mathcal{B}_{\$}$. The relation between the two games provides insight into the equilibria properties of the Colonel Blotto game with costs.

For the earlier introduced set of parameters, let $\mathcal{B}_{0}(\aRes, \bRes, n, \bm{v}, \bm{c}^{\playerA},\bm{c}^{\playerB}, g^{\playerA}, g^{\playerB}) \equiv \mathcal{B}_{0}$ denote a two-player strategic game with strategy sets of both players defined by~\eqref{pure_strategy}. The payoff to player $P \in \{\playerA, \playerB\}$ in the game $\mathcal{B}_{0}$ under strategy profile $(\bm{s}^P, \bm{s}^{\shortminus P})$ is
\begin{align*}
    &\pi^\thisPlayer_{0} \! \left( \thisPure, \thatPure \right) = \\
     \pi^\thisPlayer_{\$} & \! \left( \thisPure, \thatPure \right) + \sum_{i=1}^n c_i^{\shortminus P}\left(s^{\shortminus P}_i\right) + g^{\shortminus P} \left(\sum_{i=1}^n s_i^{\shortminus P}\right).
\end{align*}

Game $\mathcal{B}_{0}$ is obtained from the corresponding game $\mathcal{B}_{\$}$ by adding to the payoff of each player the sum of costs (both general and battlefield-specific) carried by her opponent, resulting in the zero-sum property.
The following proposition characterizes the relation between the Nash equilibria of $\mathcal{B}_{\$}$ and $\mathcal{B}_{0}$ given the same set of parameters.
\begin{proposition}~\label{prop:equiv}
For the same set of parameters $(\aRes, \bRes, n, \bm{v}, \bm{c}^{\playerA},\bm{c}^{\playerB}, g^{\playerA}, g^{\playerB})$, games $\mathcal{B}_{\$}$ and $\mathcal{B}_{0}$ have the same set of Nash equilibria.
\end{proposition}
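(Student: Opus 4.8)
The plan is to exploit the structural relation between the two games: $\mathcal{B}_{0}$ is obtained from $\mathcal{B}_{\$}$ by adding to each player's payoff a term that depends \emph{only} on the opponent's strategy. Adding an opponent-dependent quantity to a player's payoff leaves her preferences over her own strategies untouched for every fixed choice of the opponent, so the best-response structure---and hence the set of Nash equilibria---is preserved. This is precisely the notion of strategic equivalence, and the argument should go through verbatim once lifted to mixed strategies.

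First I would record the identity at the pure-strategy level. Writing $T^{\shortminus P}(\thatPure) = \sum_{i=1}^n c_i^{\shortminus P}(s^{\shortminus P}_i) + g^{\shortminus P}(\sum_{i=1}^n s_i^{\shortminus P})$, the definition of $\mathcal{B}_{0}$ reads $\pi^P_{0}(\thisPure, \thatPure) = \pi^P_{\$}(\thisPure, \thatPure) + T^{\shortminus P}(\thatPure)$, and the crucial observation is that $T^{\shortminus P}$ is a function of the opponent's assignment $\thatPure$ alone and does not involve $\thisPure$.

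Next I would lift this identity to mixed strategies. Expanding player $P$'s expected payoff in $\mathcal{B}_{0}$ as a sum over the product distribution $\xi^P \otimes \xi^{\shortminus P}$ and using $\sum_{\bm{x}} \xi^P_{\bm{x}} = 1$, the added term collapses to $\hat{T}^{\shortminus P}(\thatMixed) = \sum_{\bm{z}} \xi^{\shortminus P}_{\bm{z}} T^{\shortminus P}(\bm{z})$, which depends only on the opponent's mixed strategy $\thatMixed$. This yields $\Pi^P_{0}(\thisMixed, \thatMixed) = \Pi^P_{\$}(\thisMixed, \thatMixed) + \hat{T}^{\shortminus P}(\thatMixed)$. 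Although elementary, this expansion is the one step deserving care, since the whole argument hinges on the added term factoring out as independent of the deviating player's own randomization.

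Finally I would fix $\thatMixed$ and consider any unilateral deviation of player $P$ from $\thisMixed$ to an arbitrary $\bm{\zeta} \in \thisMixedSet$. Since $\hat{T}^{\shortminus P}(\thatMixed)$ is unaffected by $P$'s own choice, the payoff gaps coincide, namely $\Pi^P_{0}(\thisMixed, \thatMixed) - \Pi^P_{0}(\bm{\zeta}, \thatMixed) = \Pi^P_{\$}(\thisMixed, \thatMixed) - \Pi^P_{\$}(\bm{\zeta}, \thatMixed)$ for each player $P$. Therefore the no-profitable-deviation inequality that defines a Nash equilibrium holds in $\mathcal{B}_{\$}$ precisely when it holds in $\mathcal{B}_{0}$; as both games are played over the same strategy sets $S^\playerA$ and $S^\playerB$, their equilibrium sets are identical. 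I do not anticipate a genuine obstacle here---the content lies entirely in setting up the opponent-only dependence correctly and carrying it cleanly through the passage to mixed strategies.
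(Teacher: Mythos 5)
Your proposal is correct and follows essentially the same route as the paper: both rest on the observation that $\pi^P_{0}$ differs from $\pi^P_{\$}$ by a term depending only on the opponent's strategy, so payoff differences under unilateral deviations coincide. The only difference is that the paper concludes by citing Lemma~1 of Moulin and Vial on strategically equivalent games, whereas you inline that lemma by explicitly carrying the opponent-only term through the mixed extension---a correct and self-contained substitute.
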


\begin{proof}
Fix player $\examplePlayer$, two strategies $\bm{s}^P, \bm{t}^P \in S^P$ and a strategy $s^{\shortminus P} \in S^{-P}$ of player $-P$. It holds, that
\begin{align*}
    &\pi^\thisPlayer_{0} \! \left( \thisPure, \thatPure \right) - \pi^\thisPlayer_{0} \! \left( \bm{t}^P, \thatPure \right) = \\ &\pi^\thisPlayer_{\$} \! \left( \thisPure, \thatPure \right) - \pi^\thisPlayer_{\$} \! \left( \bm{t}^P, \thatPure \right).
\end{align*}
Hence, the two games are strategically equivalent in the sense that for every player $\examplePlayer$, every two strategies $\bm{s}^P, \bm{t}^P \in S^P$ and every strategy of the opponent $s^{\shortminus P} \in S^{-P}$, it holds, that 
\begin{align*}
    &\pi^\thisPlayer_{0} \! \left( \thisPure, \thatPure \right) \geq \pi^\thisPlayer_{0} \! \left( \bm{t}^P, \thatPure \right) \iff \\ &\pi^\thisPlayer_{\$} \! \left( \thisPure, \thatPure \right) \geq \pi^\thisPlayer_{\$} \! \left( \bm{t}^P, \thatPure \right).
\end{align*}
As a consequence of Lemma~1 of Moulin and Vial~(\citeyear{equivalence_0}), the two games have the same set of Nash equilibria. 
\end{proof}

\subsection{Equilibria Properties}
This subsection provides a discussion of the properties of Nash equilibria of the Colonel Blotto game with costs. 
First, we give two corollaries that follow directly from Proposition~\ref{prop:equiv}. A strategic two-player game satisfies the interchangeability of Nash equilibria~\cite{interchange_1, interchange_2} if for every two (possibly mixed) strategy profiles $(\mixedPair), (\bm{\sigma}^{\playerA}, \bm{\sigma}^{\playerB}) \in \Delta(S^{\playerA}) \times \Delta(S^{\playerB})$ that are Nash equilibria of game $\mathcal{B}_{\$}$, it holds that strategy profiles $(\bm{\xi}^{\playerA}, \bm{\sigma}^{\playerB})$ and $(\bm{\sigma}^{\playerA}, \bm{\xi}^{\playerB})$ are also Nash equilibria of game $\mathcal{B}_{\$}$. 
As every finite two-player zero-sum game satisfies the interchangeability of Nash equilibria, the following corollary holds.
\begin{corollary}\label{cor:interchangeability}
The Colonel Blotto game with costs satisfies the interchangeability of Nash equilibria.
\end{corollary}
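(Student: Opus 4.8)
The plan is to prove Corollary~\ref{cor:interchangeability} by reducing the interchangeability claim for $\mathcal{B}_{\$}$ to the well-known interchangeability property of finite two-player zero-sum games, using Proposition~\ref{prop:equiv} as the bridge. First I would observe that, by Proposition~\ref{prop:equiv}, the games $\mathcal{B}_{\$}$ and $\mathcal{B}_{0}$ share exactly the same set of Nash equilibria. Hence a mixed strategy profile is a Nash equilibrium of $\mathcal{B}_{\$}$ if and only if it is a Nash equilibrium of $\mathcal{B}_{0}$, and proving the interchangeability statement for the pairs of equilibria of $\mathcal{B}_{\$}$ is equivalent to proving it for the equilibria of $\mathcal{B}_{0}$.

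Next I would verify that $\mathcal{B}_{0}$ is indeed a finite two-player zero-sum game. Finiteness follows because each strategy set $S(D^P, n)$ consists of integer assignments $\pure \in \mathbb{N}^n$ with $\sum_{i=1}^n s_i \leq D^P$, so both $S^\playerA$ and $S^\playerB$ are finite sets. The zero-sum property follows directly from the definition of the payoffs $\pi^\thisPlayer_{0}$: summing $\pi^\playerA_{0}$ and $\pi^\playerB_{0}$ over a fixed strategy profile causes the valuation terms $v_i$ to cancel and the appended opponent-cost terms to exactly cancel the player's own cost terms, yielding $\pi^\playerA_{0} + \pi^\playerB_{0} = 0$. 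This is the short computation justifying the phrase in the text ``resulting in the zero-sum property,'' and it is what licenses invoking the classical result.

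Then I would invoke the classical theorem (attributable to the minimax structure of zero-sum games, essentially von~Neumann's theorem together with the interchangeability of maximin/minimax optimal strategies) that every finite two-player zero-sum game satisfies the interchangeability of Nash equilibria: if $(\aMixed, \bMixed)$ and $(\bm{\sigma}^{\playerA}, \bm{\sigma}^{\playerB})$ are both equilibria, then the crossed profiles $(\bm{\xi}^{\playerA}, \bm{\sigma}^{\playerB})$ and $(\bm{\sigma}^{\playerA}, \bm{\xi}^{\playerB})$ are equilibria as well. Applying this to $\mathcal{B}_{0}$ gives interchangeability for the equilibria of $\mathcal{B}_{0}$, and transporting the conclusion back through Proposition~\ref{prop:equiv} yields interchangeability for $\mathcal{B}_{\$}$.

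I do not expect a serious obstacle here, since this is a clean corollary rather than a theorem with genuine technical content; the only point requiring care is making the two logical transports through Proposition~\ref{prop:equiv} explicit, namely that the crossed profiles are constructed from equilibria of $\mathcal{B}_{\$}$, are therefore equilibria of $\mathcal{B}_{0}$, and the conclusions about them in $\mathcal{B}_{0}$ translate back to conclusions in $\mathcal{B}_{\$}$. The mildly delicate step is simply being careful that the equivalence of equilibrium \emph{sets} from Proposition~\ref{prop:equiv} is exactly what is needed to move both the hypotheses (the two given equilibria) and the conclusion (the two crossed profiles) between the two games, so that no additional structural assumption about $\mathcal{B}_{\$}$ beyond Proposition~\ref{prop:equiv} is invoked.
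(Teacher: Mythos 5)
Your proposal is correct and follows essentially the same route as the paper: the paper likewise obtains the corollary by noting that $\mathcal{B}_{0}$ is a finite two-player zero-sum game (hence satisfies interchangeability of Nash equilibria) and transporting the property to $\mathcal{B}_{\$}$ via the shared equilibrium set of Proposition~\ref{prop:equiv}. Your explicit verification of the zero-sum cancellation and of the finiteness of the strategy sets merely spells out steps the paper leaves implicit.
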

Note that one of the consequences of the interchangeability of Nash equilibria in a two-player finite game is that every player has a set of equilibrium strategies, and every pair of equilibrium strategies is a Nash equilibrium of the game. In the case of zero-sum games, equilibria strategies are called optimal strategies. As the Colonel Blotto game with costs is not zero-sum, we use the term \textit{equilibrium strategy}.
Since the set of optimal strategies in every finite two-player zero-sum game is convex, the following corollary is also a direct consequence of Proposition~\ref{prop:equiv}.
\begin{corollary}\label{cor:convex}
For each player $\thisPlayer$ the set of equilibrium strategies of player $\thisPlayer$ is convex in every Colonel Blotto game with costs.
\end{corollary}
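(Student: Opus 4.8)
The plan is to reduce the claim to the classical fact that in any finite two-player zero-sum game each player's set of optimal (maximin) strategies is convex, and to transfer this property across the equilibrium-preserving equivalence supplied by Proposition~\ref{prop:equiv}. First I would invoke Corollary~\ref{cor:interchangeability}: since $\mathcal{B}_{\$}$ satisfies interchangeability, its set of Nash equilibria decomposes as a product $\Xi^{\playerA} \times \Xi^{\playerB}$, so that for each player $\thisPlayer$ the phrase ``the set of equilibrium strategies of player $\thisPlayer$'' refers to a well-defined set $\Xi^{\thisPlayer}$, independent of the opponent's choice. This is the step that makes the statement meaningful, and it is the only conceptual subtlety in the argument.

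Next, by Proposition~\ref{prop:equiv} the Nash equilibria of $\mathcal{B}_{\$}$ coincide with those of the zero-sum game $\mathcal{B}_{0}$, and hence $\Xi^{\thisPlayer}$ equals the set of optimal strategies of player $\thisPlayer$ in $\mathcal{B}_{0}$. It therefore suffices to establish convexity in the zero-sum setting. Because the strategy sets $S^{\playerA}, S^{\playerB}$ are finite, the game $\mathcal{B}_{0}$ has a value $v$ (to player $\playerA$), and a mixed strategy $\bm{\xi}^{\playerA} \in \Delta(S^{\playerA})$ is optimal for $\playerA$ if and only if it guarantees this value against every pure reply, i.e. $\Pi^{\playerA}_{0}\!\left(\bm{\xi}^{\playerA}, \bm{z}\right) \geq v$ for all $\bm{z} \in S^{\playerB}$; symmetrically, $\bm{\xi}^{\playerB}$ is optimal for $\playerB$ iff $\Pi^{\playerB}_{0}\!\left(\bm{x}, \bm{\xi}^{\playerB}\right) \geq -v$ for all $\bm{x} \in S^{\playerA}$.

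Finally, I would observe that $\Pi^{\playerA}_{0}$ is linear in its first argument (and $\Pi^{\playerB}_{0}$ linear in its second), so each of these optimality conditions is a linear inequality in the coordinates of the mixed strategy. Consequently $\Xi^{\playerA}$ is the intersection of the simplex $\Delta(S^{\playerA})$ with finitely many closed half-spaces, and likewise for $\Xi^{\playerB}$; an intersection of convex sets is convex, which gives the claim. I expect no genuine obstacle in the calculation: the finiteness of $S^{\playerA}, S^{\playerB}$ simultaneously guarantees existence of the value and ensures the defining system consists of finitely many linear inequalities, so the only care needed is the bookkeeping that identifies $\Xi^{\thisPlayer}$ with the zero-sum optimal set via Proposition~\ref{prop:equiv} and Corollary~\ref{cor:interchangeability}.
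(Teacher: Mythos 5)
Your proposal is correct and follows essentially the same route as the paper: the paper derives the corollary directly from Proposition~\ref{prop:equiv} together with the classical fact that the set of optimal strategies of each player in a finite two-player zero-sum game is convex. You simply make explicit the standard proof of that classical fact (optimal strategies are cut out of the simplex by finitely many linear inequalities, one per pure reply), which the paper cites without proof.
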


We illustrate the consequences of these corollaries in the following example.
\begin{example}
Consider an instance $\mathcal{B}_{\$}^*$ of the Colonel Blotto game with costs, where
\begin{itemize}
    \item the players compete on two battlefields,
    \item each player can assign at most 2 resources,
    \item each battlefield is won by the player assigning strictly more resources to it (and tied otherwise),
    \item obtaining a unit of resources yields a cost of 1 for each of the players,
    \item assigning resources across battlefields yields no costs. 
\end{itemize}
Formally, 
$\mathcal{B}_{\$}^* \equiv \mathcal{B}_{\$}(2, 2, 2, \bm{v} = (\bm{\sign}), \bm{0}, \bm{0}, g^{\playerA}, g^{\playerB})$, where
    \begin{equation*}
        g^{\playerA}(\bm{s}) = g^{\playerB}(\bm{s}) = s_1 + s_2.
    \end{equation*}
The strategy sets of both players are
\begin{equation*}
    S^{\playerA} = S^{\playerB} = \{(0,0),(0,1),(1,0),(1,1),(0,2),(2,0)\},
\end{equation*}
and it can be easily calculated, that every strategy in
\begin{equation*}
    S^* = \{(0,0),(0,1),(1,0),(1,1)\},
\end{equation*}
is an equilibrium strategy of $\mathcal{B}_{\$}^*$, while remaining pure strategies $(0,2), (2,0)$ are never used in any equilibrium strategy. A consequence of Corollary~\ref{cor:convex} is that every mixed strategy with support restricted to any subset of $S^*$ is also an equilibrium strategy of $\mathcal{B}_{\$}^*$. Figure~\ref{fig:payoff_matrix} present the payoff matrix of game $\mathcal{B}_{\$}^*$. Parts of the payoff matrix corresponding to pure strategies outside of $S^*$ are shadowed, to focus attention on the strategies in $S^*$.
\begin{figure}[H]
\begin{center}
\begin{tabularx}{0.95\columnwidth}{ 
   >{\centering\arraybackslash}X |
   >{\centering\arraybackslash}X |
   >{\centering\arraybackslash}X |
   >{\centering\arraybackslash}X |
   >{\centering\arraybackslash}X |
   >{\centering\arraybackslash}X |
   >{\centering\arraybackslash}X }
 & (0,0)  & (0,1) & (1,0) & (1,1) & \textcolor{gray}{(0,2)} & \textcolor{gray}{(2,0)}  \\ 
\hline
$(0,0)$ & $\mathbin{\phantom{\shortminus}}0,\mathbin{\phantom{\shortminus}}0$ & $\shortminus1,\mathbin{\phantom{\shortminus}}0$ & $\shortminus1,\mathbin{\phantom{\shortminus}}0$ & $\shortminus2,\mathbin{\phantom{\shortminus}}0$& \textcolor{gray}{$\shortminus1,\shortminus1$} & \textcolor{gray}{$\shortminus1,\shortminus1$} \\
\hline
(0,1) & $\mathbin{\phantom{\shortminus}}0,\shortminus1$ & $\shortminus1,\shortminus1$ & $\shortminus1,\shortminus1$ & $\shortminus2,\shortminus1$ & \textcolor{gray}{$\shortminus1,\shortminus2$} & \textcolor{gray}{$\shortminus2,\shortminus1$}  \\
\hline
(1,0) & $\mathbin{\phantom{\shortminus}}0,\shortminus1$ & $\shortminus1,\shortminus1$ & $\shortminus1,\shortminus1$ & $\shortminus2,\shortminus1$ & \textcolor{gray}{$\shortminus2,\shortminus1$}  & \textcolor{gray}{$\shortminus1,\shortminus2$} \\
\hline
(1,1) & $\mathbin{\phantom{\shortminus}}0,\shortminus2$ & $\shortminus1,\shortminus2$ & $\shortminus1,\shortminus2$ & $\shortminus2,\shortminus2$ & \textcolor{gray}{$\shortminus2,\shortminus2$} & \textcolor{gray}{$\shortminus2,\shortminus2$} \\
\hline
\textcolor{gray}{(0,2)} & \textcolor{gray}{$\shortminus1,\shortminus1$} & \textcolor{gray}{$\shortminus1,\shortminus2$} &  \textcolor{gray}{$\shortminus2,\shortminus1$} & \textcolor{gray}{$\shortminus2,\shortminus2$} & \textcolor{gray}{$\shortminus2,\shortminus2$} & \textcolor{gray}{$\shortminus2,\shortminus2$} \\
\hline
\textcolor{gray}{(2,0)} & \textcolor{gray}{$\shortminus1,\shortminus1$} &  \textcolor{gray}{$\shortminus2,\shortminus1$} & \textcolor{gray}{$\shortminus1,\shortminus2$} & \textcolor{gray}{$\shortminus2,\shortminus2$} & \textcolor{gray}{$\shortminus2,\shortminus2$} & \textcolor{gray}{$\shortminus2,\shortminus2$}
\end{tabularx}
\end{center}
\caption{The payoff matrix of game $\mathcal{B}_{\$}^*$.}
\label{fig:payoff_matrix}
\end{figure}
If the opponent's equilibrium strategy is fixed, the player cannot change her payoff by deviating from one strategy in $S^*$ to another strategy in $S^*$ (this must be the case as every strategy in $S^*$ is an equilibrium strategy). However, by changing her strategy, the player can influence her opponent's equilibrium payoff.
\end{example}

\section{Equilibria Computation}
This section explains how every instance of $\mathcal{B}_{0}$ can be expressed as a variant of the Colonel Blotto game with fixed budgets under a well-defined strategy mapping. 
As a consequence, the problem of computing an equilibrium strategy of a player in the Colonel Blotto game with costs can be expressed as a linear program with a polynomial (in the number of battlefields and resources of both players) number of both constraints and variables using the BDDHS\footnote{The name of the linear program comes from the first letters of authors surnames.} linear program of Behnezhad et al.~(\citeyear{fast_2023_blotto}).

\subsection{Colonel Blotto Game with Sunk Costs}
In the classic Colonel Blotto game, a strategy of all player $P \in \{\playerA, \playerB\}$ is an assignment of all of her available $D^P$ resources over the set of the $[n]$ battlefields. Hence, the strategy set of a player $P$ is
\begin{equation}\label{def:pure_strategy_set_Plassical}
    M(D^P, n) = \left\{ \pure \in \mathbb{N}^n : \sum^n_{i=1} s_i = \thisRes \right\},
\end{equation}
which constitutes a subset of the player's strategy set $S(D^P, n)$ of the Colonel Blotto game with costs. Elements of set $M(D^P, n) $ are called full assignments, as players are required to assign all of their available resources. When the parameters are clear from the context, the set of full assignments is denoted $M^P$.

\begin{definition}
The Colonel Blotto game with costs, $\mathcal{B}(\aRes, \bRes, n, \bm{v}, \bm{0},\bm{0}, \bm{0}, \bm{0})$, where all costs functions are trivial (equal to $\bm{0})$ and strategy sets of both players are restricted to full assignments~\eqref{def:pure_strategy_set_Plassical} is called the \textbf{Colonel Blotto game with sunk costs} and denoted $\mathcal{B}(\aRes, \bRes, n, \bm{v}) \equiv \mathcal{B}$.
\end{definition}

 \subsection{Expressing $\mathcal{B}_{0}$ as $\mathcal{B}$}
For a given game $\mathcal{B}_{0}(\aRes, \bRes, n, \bm{v},\bm{c}^{\playerA}, \bm{c}^{\playerB}, g^{\playerA}, g^{\playerB})$, consider the \textit{corresponding} Colonel Blotto game with sunk costs $\mathcal{B}(\aRes, \bRes, n + 1, \hat{\bm{v}})$, with valuation functions $\hat{v}_i : \mathbb{N} \times \mathbb{N} \longrightarrow  \mathbb{R}$ defined as
\begin{align}\label{eq:payoff_zero-sum}
    & \hat{v}_i(s_i^{\playerA}, s_i^{\playerB}) = \\
    & 
    \begin{cases}
        v_i\left(s_i^{\playerA}, s_i^{\playerB}\right) - c_i^{\playerA}\left(s_i^{\playerA}\right) + c_i^{\playerB}\left(s_i^{\playerB}\right) \textnormal{, if $i \in \{1,2,\ldots,n\}$}, \\
        - g^{\playerA}\left(D^{\playerA} - s_{n+1}^{\playerA}\right) + g^{\playerB}\left(D^{\playerB} - s_{n+1}^{\playerB}\right) \textnormal{, if $i = n+1$.}
    \end{cases} \nonumber
\end{align}
The following result states that the Colonel Blotto game with costs and the corresponding Colonel Blotto game with sunk costs are strategically equivalent.

\begin{proposition}\label{prop:same_game}
For a given a tuple of parameters, the game $\mathcal{B}_{0}(\aRes, \bRes, n, \bm{v},\bm{c}^{\playerA},\bm{c}^{\playerB}, g^{\playerA}, g^{\playerB})$ and the corresponding Colonel Blotto game with sunk costs $\mathcal{B}(\aRes, \bRes, n + 1, \hat{\bm{v}})$ are the same game under the bijective strategy mapping $\bm{m} : S(D^P, n) \longrightarrow M(D^P, n+1)$, such that 
\begin{equation}\label{eq:strategy_mapping}
    \bm{m}(\bm{s}^P) = \left( s^P_1, s^P_2, \ldots, s^P_n, D^P - \sum_{i=1}^n s_i^P\right).
\end{equation}
\end{proposition}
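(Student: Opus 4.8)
The plan is to show that $\bm{m}$ is a bijection on strategies and that it preserves payoffs, so that $\mathcal{B}_{0}$ and $\mathcal{B}(\aRes, \bRes, n+1, \hat{\bm{v}})$ are literally the same strategic game up to renaming of strategies.

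First I would check that $\bm{m}$ is a well-defined bijection. For any $\bm{s}^P \in S(D^P, n)$ the constraint $\sum_{i=1}^n s_i^P \leq D^P$ makes the appended coordinate $D^P - \sum_{i=1}^n s_i^P$ a nonnegative integer, and the $n+1$ coordinates sum to exactly $D^P$, so $\bm{m}(\bm{s}^P) \in M(D^P, n+1)$. Injectivity is immediate because $\bm{m}$ fixes the first $n$ coordinates. Surjectivity follows by dropping the last coordinate of any full assignment $\bm{t} \in M(D^P, n+1)$: since $t_{n+1} \geq 0$ forces $\sum_{i=1}^n t_i \leq D^P$, the truncated vector lies in $S(D^P, n)$ and maps back to $\bm{t}$.

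The core of the argument is the payoff identity. For player $\playerA$ I would expand $\pi^\playerA_{0}(\bm{s}^\playerA, \bm{s}^\playerB)$ using the definitions of $\pi_{0}$ and $\pi_{\$}$, collecting the per-battlefield terms into $\sum_{i=1}^n \left[ v_i(s_i^\playerA, s_i^\playerB) - c_i^\playerA(s_i^\playerA) + c_i^\playerB(s_i^\playerB) \right]$ and the obtainment terms into $-g^\playerA(\sum_{i=1}^n s_i^\playerA) + g^\playerB(\sum_{i=1}^n s_i^\playerB)$. On the other side, since $\mathcal{B}(\aRes, \bRes, n+1, \hat{\bm{v}})$ has trivial cost functions, the payoff to $\playerA$ is simply $\sum_{i=1}^{n+1} \hat{v}_i$. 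The first $n$ summands already agree with the per-battlefield terms by the definition of $\hat{v}_i$ for $i \leq n$. The single place where the construction does real work is the observation that under $\bm{m}$ one has $D^P - s_{n+1}^P = \sum_{i=1}^n s_i^P$, so that $\hat{v}_{n+1}$ evaluates to exactly $-g^\playerA(\sum_{i=1}^n s_i^\playerA) + g^\playerB(\sum_{i=1}^n s_i^\playerB)$, reconstructing precisely the obtainment terms. Hence $\pi^\playerA_{0}$ and the $\playerA$-payoff in $\mathcal{B}$ coincide along $\bm{m}$.

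Finally I would dispose of player $\playerB$ without repeating the computation, by noting that both games are zero-sum: this holds in $\mathcal{B}_{0}$ by its construction, and in $\mathcal{B}$ because each $\hat{v}_i$ contributes $+\hat{v}_i$ to $\playerA$ and $-\hat{v}_i$ to $\playerB$ with no costs, so the two payoffs sum to zero. Agreement of the $\playerA$-payoffs then forces agreement of the $\playerB$-payoffs. Having matched both strategy spaces and all payoffs, $\bm{m}$ is an isomorphism of strategic games, which is the claim. I expect no genuine obstacle: the entire difficulty has been front-loaded into the choice of $\hat{v}_{n+1}$, and once the substitution $D^P - s_{n+1}^P = \sum_{i=1}^n s_i^P$ is in hand, everything reduces to bookkeeping.
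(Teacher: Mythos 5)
Your proposal is correct and follows essentially the same route as the paper: the heart of both arguments is the expansion of $\pi^{\playerA}_{0}$ and the substitution $D^P - s_{n+1}^P = \sum_{i=1}^n s_i^P$ that makes $\hat{v}_{n+1}$ reproduce the obtainment-cost terms. Your explicit verification that $\bm{m}$ is a bijection and your zero-sum argument for player $\playerB$ are slightly more careful than the paper, which simply asserts both points, but they do not change the approach.
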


\begin{proof}
For every strategy profile $(\bm{s}^{\playerA}, \bm{s}^{\playerB})$ it holds that
\begin{align*}
    &\pi_{0}^{\playerA} \left(\bm{s}^{\playerA}, \bm{s}^{\playerB}\right) \\
    = &\sum_{i=1}^n \left( v_i\left(s_i^{\playerA}, s_i^{\playerB}\right) - c_i^{\playerA} \left(s_i^{\playerA} \right) +  c_i^{\playerB} \left(s_i^{\playerB} \right)\right) \\
    & \;\;\; - g^{\playerA} \left( \sum_{i=1}^n s_i^{\playerA} \right) + g^{\playerB} \left( \sum_{i=1}^n s_i^{\playerB} \right) \\ 
    = &\sum_{i=1}^n \left( v_i\left(s_i^{\playerA}, s_i^{\playerB}\right) - c_i^{\playerA} \left(s_i^{\playerA} \right) +  c_i^{\playerB} \left(s_i^{\playerB} \right)\right) \\
    & \;\;\; - g^{\playerA} \left( D^{\playerA} - m_{n+1} \left( \bm{s}^{\playerA} \right) \right) + g^{\playerB} \left( D^{\playerB} - m_{n+1} \left( \bm{s}^{\playerB} \right) \right) \\
    = &\sum_{i=1}^{n+1} \hat{v}_i\left(m_i\left(\bm{s}^{\playerA} \right), m_i \left( \bm{s}^{\playerB} \right) \right) \\
    = & \; \; \pi^{\playerA} \left( \bm{m} \left( \bm{s}^{\playerA} \right), \bm{m} \left( \bm{s}^{\playerB} \right) \right),
\end{align*}
where $\pi^{\playerA} ( \bm{m} ( \bm{s}^{\playerA} ), \bm{m} \left( \bm{s}^{\playerB} \right))$ is the payoff to player $\playerA$ in the Colonel Blotto game with sunk costs $\mathcal{B}(\aRes, \bRes, n + 1, \hat{\bm{v}})$ under strategy profile $( \bm{m} ( \bm{s}^{\playerA} ), \bm{m} \left( \bm{s}^{\playerB} \right)) \in M(D^{\playerA}, n+1) \times M(D^{\playerB}, n+1)$. The payoff equality in both games holds for player $\playerB$ as well (the reason for using player $\playerA$ is purely technical, as it determines the sign of $v_i$ functions). Because of that, the two games are equivalent under the considered strategy mapping. 
\end{proof}
We conclude this subsection with the following observation regarding the relation between the strategies of the Colonel Blotto game with costs and the strategies of the corresponding Colonel Blotto game with sunk costs.
\begin{observation}\label{obs:obtained_resources}
Consider the Colonel Blotto game with costs, $\mathcal{B}_{\$}(\aRes, \bRes, n, \bm{v},\bm{c}^{\playerA},\bm{c}^{\playerB}, g^{\playerA}, g^{\playerB}) \equiv \mathcal{B}_{\$}$, and the corresponding Colonel Blotto game with sunk costs, $\mathcal{B}(\aRes, \bRes, n + 1, \hat{\bm{v}}) \equiv \mathcal{B}$. Fix Player $\thisPlayer$ and her pure strategy $\bm{s}^{\thisPlayer} \in M(D^{\thisPlayer}, n+1)$ in $\mathcal{B}$. The total number of resources obtained by player $\thisPlayer$ according to the corresponding pure strategy $\bm{m}^{\shortminus 1}(\bm{s}^{\thisPlayer}) \in S (D^{\thisPlayer}, n)$ of $\mathcal{B}_{\$}$ is exactly $D^{\thisPlayer} - s_{n+1}^{\thisPlayer}$.
\end{observation}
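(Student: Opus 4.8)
The plan is to prove the statement by directly unwinding two definitions: the inverse of the bijection $\bm{m}$ from Proposition~\ref{prop:same_game}, and the meaning of ``total number of resources obtained'' in the game $\mathcal{B}_{\$}$.

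First I would compute $\bm{m}^{-1}$. By~\eqref{eq:strategy_mapping}, the map $\bm{m}$ takes a partial assignment $(s_1^P, \ldots, s_n^P) \in S(D^P, n)$ and appends the slack coordinate $D^P - \sum_{i=1}^n s_i^P$, producing a full assignment in $M(D^P, n+1)$. Since $\bm{m}$ is a bijection whose only effect is to record this slack, its inverse simply deletes the last coordinate: for $\bm{s}^P = (s_1^P, \ldots, s_n^P, s_{n+1}^P) \in M(D^P, n+1)$ we have $\bm{m}^{-1}(\bm{s}^P) = (s_1^P, \ldots, s_n^P) \in S(D^P, n)$. One should first check that this really lands in $S(D^P, n)$, which holds because $\sum_{i=1}^n s_i^P \le \sum_{i=1}^{n+1} s_i^P = D^P$.

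Second, I would recall from the model section that for any partial assignment in $S(D^P, n)$ the total number of resources a player obtains---equivalently, the argument passed to the obtainment cost function $g^P$ in~\eqref{pure_payoffs}---is the sum of the resources assigned across all $n$ battlefields. Applied to $\bm{m}^{-1}(\bm{s}^P)$, this total equals $\sum_{i=1}^n s_i^P$.

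Finally, I would invoke the defining constraint of $M(D^P, n+1)$ from~\eqref{def:pure_strategy_set_Plassical}, namely $\sum_{i=1}^{n+1} s_i^P = D^P$. Isolating the last term yields $\sum_{i=1}^n s_i^P = D^P - s_{n+1}^P$, which is exactly the claimed quantity. I do not expect any genuine obstacle here; the statement is a bookkeeping identity, and the only point requiring care is making explicit that ``total resources obtained'' is precisely the first-$n$-coordinate sum, so that the full-assignment constraint can be read off directly.
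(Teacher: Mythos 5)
Your proof is correct and matches the paper's (implicit) reasoning exactly: the paper states this as an unproved observation, treating it as the immediate bookkeeping consequence of the mapping $\bm{m}$ in~\eqref{eq:strategy_mapping} and the full-assignment constraint in~\eqref{def:pure_strategy_set_Plassical} that you spell out. Your explicit check that $\bm{m}^{-1}$ lands in $S(D^P, n)$ is a reasonable bit of added care, but there is no substantive difference in approach.
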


\subsection{Computational Complexity}
As a consequence of Propositions~\ref{prop:equiv} and~\ref{prop:same_game}, the problem of computing an equilibrium strategy of the Colonel Blotto game with costs can be restated as a problem of computing an optimal strategy of the corresponding Colonel Blotto game with sunk costs with one additional battlefield and the same numbers of resources of both players.

\begin{proposition}\label{prop:computing}
The problem of computing a player equilibrium strategy of the Colonel Blotto game with costs $\mathcal{B}_{\$}(\aRes, \bRes, n, \bm{v}, \bm{c}^{\playerA},\bm{c}^{\playerB}, g^{\playerA}, g^{\playerB})$ can be formulated as a linear program with at most $\Theta ((D^{\max})^2 \cdot n)$ constraints and $\Theta ((D^{\max})^2 \cdot n)$ variables, where $D^{\max} = \max\{D^{\playerA}, D^{\playerB}\}$.
\end{proposition}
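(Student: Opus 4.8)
The plan is to assemble the proposition from the three results already established, reducing equilibrium computation in the cost game to a single invocation of the polynomial-size LP for fixed-budget Blotto games. By Proposition~\ref{prop:equiv}, the games $\mathcal{B}_{\$}$ and $\mathcal{B}_{0}$ share the same equilibrium set, so it suffices to compute an equilibrium strategy of $\mathcal{B}_{0}$. By Proposition~\ref{prop:same_game}, $\mathcal{B}_{0}$ is literally the game $\mathcal{B}(\aRes, \bRes, n+1, \hat{\bm{v}})$ up to the bijective relabeling $\bm{m}$ of pure strategies; hence an equilibrium strategy of $\mathcal{B}_{0}$ is obtained from an optimal strategy of $\mathcal{B}(\aRes, \bRes, n+1, \hat{\bm{v}})$ by pushing it through $\bm{m}^{-1}$, and conversely. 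So the whole task is to produce an optimal strategy of the corresponding sunk-cost game by a polynomial-size LP and then translate it back.

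The key verification is that $\mathcal{B}(\aRes, \bRes, n+1, \hat{\bm{v}})$ meets the hypotheses under which the BDDHS linear program of Behnezhad et al.~(\citeyear{fast_2023_blotto}) applies, namely that it is a Colonel Blotto game with fixed budgets whose payoffs are bilinear (the total payoff is a sum of per-battlefield payoffs) and zero-sum at every battlefield. Bilinearity is immediate from~\eqref{eq:payoff_zero-sum}, which defines the payoff battlefield by battlefield. For the per-battlefield zero-sum property I would check that, with the sign convention of the proof of Proposition~\ref{prop:same_game}, player $\playerA$ receives $\hat{v}_i$ and player $\playerB$ receives $-\hat{v}_i$ at each $i \in \{1, \dots, n+1\}$: on the first $n$ battlefields this is the local balance $v_i - c_i^{\playerA} + c_i^{\playerB}$ against its negation, and on battlefield $n+1$ it is the obtainment-cost term $-g^{\playerA} + g^{\playerB}$ against its negation, where Observation~\ref{obs:obtained_resources} supplies the meaning of the slack coordinate $D^{\thisPlayer} - s^{\thisPlayer}_{n+1}$. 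Thus $\mathcal{B}(\aRes, \bRes, n+1, \hat{\bm{v}})$ lies in exactly the class of games the BDDHS program solves.

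Finally I would instantiate the LP and count. The BDDHS program encodes each player's mixed strategy as a flow in a layered DAG with one layer per battlefield and node labels tracking cumulative allocation, together with minimax constraints characterizing best responses; for $m$ battlefields and budget bound $D$ its size is $\Theta(D^2 m)$, and this is asymptotically tight. Here $m = n+1$ and the budget bound is $D^{\max} = \max\{D^{\playerA}, D^{\playerB}\}$, so the program has $\Theta((D^{\max})^2 (n+1)) = \Theta((D^{\max})^2 n)$ constraints and variables; solving it and applying $\bm{m}^{-1}$ yields the desired equilibrium strategy of $\mathcal{B}_{\$}$. The main obstacle is not any single computation but the bookkeeping that makes the reduction faithful at the level of the succinct representation: the LP returns a flow (a marginal object), not an explicit distribution over the exponentially many pure strategies, so I must argue that the bijection $\bm{m}$ — which only shifts mass into the fresh coordinate $n+1$ — carries flows of $\mathcal{B}(\aRes, \bRes, n+1, \hat{\bm{v}})$ to valid succinct strategies of $\mathcal{B}_{0}$ without blow-up, and that folding the extra battlefield into the asymptotics leaves the $\Theta((D^{\max})^2 n)$ bound intact.
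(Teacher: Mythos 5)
Your proposal is correct and follows essentially the same route as the paper: chain Proposition~\ref{prop:equiv} and Proposition~\ref{prop:same_game} to reduce to the sunk-cost game $\mathcal{B}(\aRes, \bRes, n+1, \hat{\bm{v}})$, invoke the BDDHS linear program of Behnezhad et al., and absorb the extra battlefield into the $\Theta((D^{\max})^2 \cdot n)$ bound. Your additional checks (per-battlefield zero-sum structure of $\hat{\bm{v}}$ and the translation of the flow representation through $\bm{m}^{-1}$) are sound elaborations of steps the paper leaves implicit.
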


\begin{proof}
Consider an instance of the Colonel Blotto game with costs, $\mathcal{B}_{\$}(\aRes, \bRes, n, \bm{v}, \bm{c}^{\playerA},\bm{c}^{\playerB}, g^{\playerA}, g^{\playerB})$. By Proposition~\ref{prop:equiv}, computing an equilibrium strategy of $\mathcal{B}_{\$}$ is equivalent to computing optimal strategy of the corresponding game $\mathcal{B}_{0}(\aRes, \bRes, n, \bm{v}, \bm{c}^{\playerA},\bm{c}^{\playerB}, g^{\playerA}, g^{\playerB})$. By Proposition~\ref{prop:same_game}, this is equivalent to computing an optimal strategy of the corresponding Colonel Blotto game with sunk costs, $\mathcal{B}(\aRes, \bRes, n + 1, \hat{\bm{v}})$.
By Theorem~1~\cite{fast_2023_blotto}, an optimal strategy of the Colonel Blotto game with sunk costs, $\mathcal{B}(\aRes, \bRes, n + 1, \hat{\bm{v}})$ can be expressed as a linear program with at most $\Theta ((D^{\max})^2 \cdot (n+ 1))$ constraints and $\Theta ((D^{\max})^2 \cdot (n+1))$ variables, where $D^{\max} = \max\{D^{\playerA}, D^{\playerB}\}$. Since $\Theta ((D^{\max})^2 \cdot (n+ 1)) = \Theta ((D^{\max})^2 \cdot n)$, this concludes the proof.
\end{proof}
As every pair of equilibrium strategies is a Nash equilibrium of the Colonel Blotto game with costs (Corollary~\ref{cor:interchangeability}), the following theorem is a direct consequence of Proposition~\ref{prop:computing}.
\begin{theorem}
A Nash equilibrium of the Colonel Blotto game with costs can be computed in polynomial time with respect to the number of battlefields and the number of resources of both players.
\end{theorem}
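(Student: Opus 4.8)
The plan is to assemble a full equilibrium profile out of two independent single-player computations, each supplied by Proposition~\ref{prop:computing}, and then to glue them together using the interchangeability granted by Corollary~\ref{cor:interchangeability}. First I would apply Proposition~\ref{prop:computing} twice, once for player $\playerA$ and once for player $\playerB$, obtaining for each player a linear program with $\Theta((D^{\max})^2 \cdot n)$ constraints and variables whose solution encodes an equilibrium strategy. Since a linear program can be solved in time polynomial in its number of constraints and variables (for instance by the ellipsoid or an interior-point method), and since $D^{\max} = \max\{D^{\playerA}, D^{\playerB}\}$, each of these two computations runs in time polynomial in $n$ and in the numbers of resources of both players.

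Next I would address the fact that the linear program of Proposition~\ref{prop:computing} is built on the corresponding sunk-cost game $\mathcal{B}(\aRes, \bRes, n+1, \hat{\bm{v}})$ and therefore returns, for each player $\thisPlayer$, an optimal mixed strategy supported on full assignments in $M(D^\thisPlayer, n+1)$. By Proposition~\ref{prop:same_game} the map $\bm{m}$ is a bijection between $S(D^\thisPlayer, n)$ and $M(D^\thisPlayer, n+1)$, so pulling the computed strategy back along $\bm{m}^{\shortminus 1}$ yields a mixed strategy $\mixed^\thisPlayer \in \Delta(S^\thisPlayer)$; this pullback is merely a relabeling of a support of polynomial size and hence costs only polynomial time. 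Propositions~\ref{prop:same_game} and~\ref{prop:equiv} together guarantee that the resulting $\aMixed$ and $\bMixed$ are equilibrium strategies of the original game $\mathcal{B}_{\$}$.

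Finally I would invoke Corollary~\ref{cor:interchangeability}: because the Colonel Blotto game with costs satisfies the interchangeability of Nash equilibria, pairing the two separately obtained equilibrium strategies produces a genuine Nash equilibrium $(\aMixed, \bMixed)$ of $\mathcal{B}_{\$}$. Concatenating the two polynomial-time linear-program solves with the two polynomial-time relabelings then yields the claimed polynomial-time procedure.

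I expect the only delicate point to be conceptual rather than computational: the linear program of Proposition~\ref{prop:computing} produces a single player's strategy, not a strategy profile, so solving it independently for the two players would in a general game fail to yield a joint equilibrium. It is precisely the interchangeability of Corollary~\ref{cor:interchangeability} --- itself inherited from the zero-sum reformulation $\mathcal{B}_{0}$ of Proposition~\ref{prop:equiv} --- that licenses combining the two independent optima into one Nash equilibrium. Everything else reduces to the routine observations that a polynomial-size linear program is solved in polynomial time and that a bijective relabeling of a polynomial-size support is cheap.
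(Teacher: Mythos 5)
Your proposal is correct and follows essentially the same route as the paper: the paper derives the theorem directly from Proposition~\ref{prop:computing} together with the remark that, by Corollary~\ref{cor:interchangeability}, every pair of independently computed equilibrium strategies forms a Nash equilibrium. Your additional explicit steps (polynomial-time LP solving, pulling the strategies back along $\bm{m}^{\shortminus 1}$) merely spell out details the paper leaves implicit.
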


We conclude this section with the high-level idea of our implementation of a program computing an equilibrium strategy of the Colonel Blotto game with costs. 
In the description of our implementations we refer to the linear program BDDHS proposed by Behnezhad et al.~(\citeyear{fast_2023_blotto}), which when given an instance of the Colonel Blotto game with sunk costs, $\mathcal{B}$, defines a set of variables and constraints capturing any mixed strategy of a player. For clarity and conciseness, we only refer to the variables we operate on directly:
\begin{itemize}
    \item $u^{\playerB}$ - a variable of the program describing the bound on the players' $\playerB$ maximal payoff in $\mathcal{B}_0(n, D^{\playerA}, D^{\playerB}, \bm{v})$;
    \item $P_i(t)$ - a variable describing the probability of player $\playerA$ assigning exactly $t$ resources to battlefield $i$.
\end{itemize}
Algorithm~\ref{alg:lp_0} describes the idea of our implementation.
\begin{algorithm}
\caption{Computing an equilibrium strategy of player $\playerA$ of the Colonel Blotto game with costs.}\label{alg:lp_0}
\textbf{Input}: A set of parameters $(\aRes, \bRes, n, \bm{v}, \bm{c}^{\playerA},\bm{c}^{\playerB}, g^{\playerA}, g^{\playerB})$. \\
\textbf{Output}: Linear program describing an equilibrium strategy of player $\playerA$.
\begin{algorithmic}[1]
\STATE LP = BDDHS($\mathcal{B}(\aRes, \bRes, n + 1, \hat{\bm{v}})$) \\ \;\;\# Payoff function vector $\hat{\bm{v}}$ is defined as in \\ \;\;\# Equation~\eqref{eq:payoff_zero-sum}.
\STATE LP.minimize($u^{\playerB}$)
\RETURN LP
\end{algorithmic}
\end{algorithm}
Formally, Algorithm~\ref{alg:lp_0} returns a linear program describing an optimal strategy of the Colonel Blotto game with sunk costs $\mathcal{B}(\aRes, \bRes, n + 1, \hat{\bm{v}})$. However, the inverse $m^{\shortminus 1}$ of strategy mapping $m$ defined in Equation~\eqref{eq:strategy_mapping} defines the exact relation between any optimal strategy described by LP and the corresponding equilibrium strategy of the Colonel Blotto game with costs.
As stated by Observation~\ref{obs:obtained_resources}, for a given number of resources $t \in \{0,1,\ldots,D^{\playerA}\}$, the variable $P_{n+1}(t)$ in the resulting LP program describes the probability that player $\playerA$ obtained exactly $(D^\playerA - t)$ resources in the equilibrium strategy.

\section{Experimental Results}
In this section, we report on the carried-out experiments, focusing on the amount of resources players obtain in equilibrium. First, we provide a high-level description of the implemented program, used for finding the lower (upper) bound on the number of resources obtained in equilibrium. Second, we report and discuss the obtained results. Lastly, we report the running times of our program.

In order to investigate the number of resources the players obtain in equilibrium, we introduced a new variable, $D$, which describes the expected number of resources obtained by player $\playerA$. Algorithm~\ref{alg:lp_max} describes the high-level idea of computing the maximal (minimal) number of resources obtained by a player in equilibrium.

\begin{algorithm}
\caption{Computing the maximal (minimal) expected number of resources in any equilibrium strategy of player $\playerA$ of the Colonel Blotto game with costs.}\label{alg:lp_max}
\textbf{Input}: A set of parameters $(\aRes, \bRes, n, \bm{v}, \bm{c}^{\playerA},\bm{c}^{\playerB}, g^{\playerA}, g^{\playerB})$. \\
\textbf{Output}: Maximal (minimal) number of resources obtained by player $\playerA$ in equilibrium.
\begin{algorithmic}[1]
\STATE LP = BDDHS($\mathcal{B}(\aRes, \bRes, n + 1, \hat{\bm{v}})$) \\ \;\;\# Payoff function vector $\hat{\bm{v}}$ is defined as in \\ \;\;\# Equation~\eqref{eq:payoff_zero-sum}.
\STATE LP.minimize($u^{\playerB}$)
\STATE LP.fixVariableValue($u^{\playerB}$) 
\\ \;\; \# LP is restricted to equilibrium strategies.
\STATE LP.addVariable(D)
\STATE LP.addConstraint(D $= \sum_{t = 0}^{D^{\playerA}} P_{n+1}(t) \cdot (D^{\playerA} - t)$) 
\\ \;\;\# D describes the expected number of resources 
\\ \;\;\# obtained by player $\playerA$.
\STATE LP.maximize(D)
\RETURN D
\end{algorithmic}
\end{algorithm}

We proceed to report our results. Specifically, we investigate the relation between the costs of obtaining resources and the number of resources players obtain in equilibrium of the Colonel Blotto game with costs, using the implementation of Algorithm~\ref{alg:lp_max}. We consider models with linear costs, $\mathcal{B}_{\$}$, such that
\begin{align*}
    & D^{\playerA} = D^{\playerB} \text{, and } \\
    & \pi^{\thisPlayer} (\thisPure, \thatPure) = \sum_{i = 1}^n \sign(s_i^{\thisPlayer} - s_i^{\thatPlayer}) - c_0 \cdot \sum_{i=1}^n s_i^{\thisPlayer},
\end{align*}
for some unit costs coefficient $c_0$.
The following hypothesis captures the structure of the preliminary results of our experiments.
\begin{hypothesis}\label{obs:structure}
All of the results concerning the equilibrium number of resources, denoted by $D(D^{\playerA}, n, c)$, satisfy
\begin{align*}
    & D\left(D^{\playerA}, n, c_0\right) = D^{\playerA} \text{, if } D^{\playerA} \leq n \cdot \left(\left\lfloor \frac{1}{c_0} \right\rfloor - 1\right), \\
    & D\left(D^{\playerA}, n, c_0\right) = \min\left\{D^{\playerA}, n \cdot \left\lfloor \frac{1}{c_0} \right\rfloor \right\} \text{, if } \left\lfloor \frac{1}{c_0} \right\rfloor < \frac{1}{c_0}, \\
    & D\left(D^{\playerA}, n, c_0\right) \in \left(n \cdot \left(\frac{1}{c_0} - 1 \right), \min\left\{n \cdot \frac{1}{c_0}, D^{\playerA}\right\}\right), \\
    & \text{ otherwise, i.e., when } \\
    & D^{\playerA} > n \cdot \left(\left\lfloor \frac{1}{c_0} \right\rfloor - 1\right) \text{ and } \left\lfloor \frac{1}{c_0} \right\rfloor = \frac{1}{c_0}.
\end{align*}
\end{hypothesis}
The preliminary results suggest the following structure. The only case when the number of resources used in equilibrium is not unique (i.e., the lower and the upper bounds are different), is when the inverse of unit costs, $c_0^{\shortminus 1}$, is a natural number and the number of resources, $D^{\playerA}$ is sufficiently large. Since the set of equilibrium strategies is convex, for such cases every number of resources between the lower and upper bound is also obtained in some equilibrium strategy. In order to test Hypothesis~\ref{obs:structure}, we ran experiments operating on the inverse of the unit costs as a parameter. Figure~\ref{fig:params_2} presents the considered range of parameters.
\begin{figure}[H]
\begin{center}
\begin{tabularx}{0.95\columnwidth}{ 
   >{\centering\arraybackslash}X 
   >{\centering\arraybackslash}X 
   >{\centering\arraybackslash}X 
   >{\centering\arraybackslash}X 
   >{\centering\arraybackslash}X  }
\hline
 & $n$ & $D_A$ & $D_B$ & $c_0^{\shortminus 1}$ \\
\hline
min & 3 & 1 & 1 & 1 \\
max & 5 & 50 & 50 & 10 \\
interval & 1 & 1 & 1 & 0.25  \\
\hline
\end{tabularx}
\end{center}
    \caption{The range of parameters considered in the experiments aimed to test Hypothesis~\ref{obs:structure}. The min (max) row corresponds to the lower (upper) bound on the considered parameter, and the interval row describes the difference between the two consecutive values of the parameter.}
    \label{fig:params_2}
\end{figure}
The results fit Hypothesis~\ref{obs:structure}. Figure~\ref{fig1} illustrates the maximal and minimal number of resources used by a player in equilibrium as a function of the inverse of the unit costs, $c_0^{\shortminus 1}$, for $c_0^{\shortminus 1} \in \{1, \frac{5}{4}, \frac{6}{4}, \ldots, 10\}$.
\begin{figure}
\centering
\includegraphics[width=0.8\columnwidth]{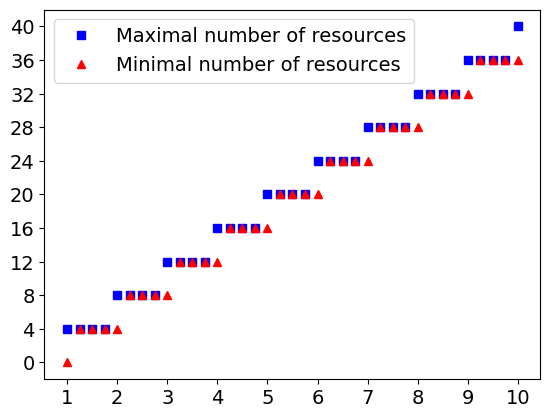} 
\caption{The $y$-axis is the number of resources used in equilibrium and the $x$-axis is the inverse $c_0^{\shortminus 1}$ of the unit costs, $c_0$. The considered number of available resources of both players is $D^{\playerA} = 40$ and players compete on $n=4$ battlefields.}
\label{fig1}
\end{figure}
As illustrated in Figure~\ref{fig1}, the number of resources obtained in equilibrium increases when the unit cost decreases. The intuition behind it is that the same total carried cost allows players to obtain more resources as the unit cost decreases.

We look at the total costs carried by a player in equilibrium and check the idea of convergence to the continuous model. Figure~\ref{fig2} presents the maximal and minimal equilibrium expenditure for the same experimental data that are presented in Figure~\ref{fig1}. Figure~\ref{fig2} also includes a constant line presenting the unique equilibrium expenditure of the corresponding continuous model of Roberson and Kvasov~(\citeyear{kvasov}).

\begin{figure}
\centering
\includegraphics[width=0.8\columnwidth]{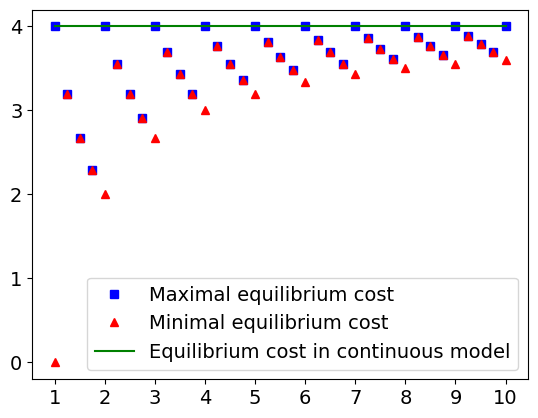} 
\caption{The $y$-axis is the cost carried in an equilibrium and the $x$-axis is the inverse $c_0^{\shortminus 1}$ of the unit costs, $c_0$.}
\label{fig2}
\end{figure}

We move on to report on the running times of our implementation of Algorithm~\ref{alg:lp_0}. The BDDHS linear program used in Algorithm~\ref{alg:lp_0} was implemented as presented by Behnezhad et al.~(\citeyear{fast_2023_blotto}). Our implementation was done in $C\texttt{++}$, using the state-of-the-art \textit{Gurobi}~\cite{gurobi} LP-optimizer.
We investigated the program running time for two settings. In the first setting, we considered linear obtainment costs with unit costs of a resource equal to $0.05$ and no assignment costs. Hence, the payoff $\hat{\pi}$ to player $\thisPlayer$ is
\begin{equation*}
    \hat{\pi}^{\thisPlayer} (\thisPure, \thatPure) = \sum_{i = 1}^n \sign(s_i^{\thisPlayer} - s_i^{\thatPlayer}) - \frac{1}{20} \cdot \sum_{i=1}^n s_i^{\thisPlayer}.
\end{equation*}
In the second setting, we considered squared assignment costs with a coefficient of 0.01 and no obtainment costs. Hence, the payoff $\bar{\pi}$ to player $\thisPlayer$ is
\begin{equation*}
    \bar{\pi}^{\thisPlayer} (\thisPure, \thatPure) = \sum_{i = 1}^n \sign(s_i^{\thisPlayer} - s_i^{\thatPlayer}) - \frac{1}{100} \cdot \sum_{i=1}^n (s_i^{\thisPlayer})^2.
\end{equation*}
Figure~\ref{tab:time} presents the obtained running times. We ran the code on a machine with an Apple M3 Pro processor unit and an 18-GB memory. 
\begin{figure}[H]
\begin{center}
\begin{tabularx}{0.95\columnwidth}{ 
   >{\centering\arraybackslash}s
   >{\centering\arraybackslash}s 
   >{\centering\arraybackslash}s 
   >{\centering\arraybackslash}b 
   >{\centering\arraybackslash}b  }
\hline
$n$ & $D_A$ & $D_B$ & Time ($\hat{\pi}$)  & Time ($\bar{\pi}$)\\
\hline
10 & 30 & 30 & 0.534s & 0.486s \\
10 & 30 & 50 & 2.114s & 1.629s \\
10 & 50 & 50 & 3.964s & 4.110s \\
20 & 30 & 30 & 1.296s & 1.615s \\
20 & 30 & 50 & 8.070s & 6.739s \\
20 & 50 & 50 & 10.287s & 10.968s \\
30 & 30 & 30 & 3.354s & 2.180s \\
30 & 30 & 50 & 15.177s & 16.376s \\
30 & 50 & 50 & 15.182s & 19.988s \\
\hline
\end{tabularx}
\end{center}
\caption{The averages of 10 running times of our implementation of Algorithm~\ref{alg:lp_0}. The fourth column ($\hat{\pi}$) presents the running times for models with linear obtainment costs, while the fifth column ($\bar{\pi}$) presents the running times for models with squared assignment costs.}
\label{tab:time}
\end{figure}
\section{Conclusions}
In this paper, we studied the Colonel Blotto game with costs. We demonstrated that the equilibrium strategies of this generalized game can be computed in polynomial time with respect to the game parameters. Furthermore, we provided a comprehensive method for computing these equilibrium strategies.
The significance of our contribution is two-fold. Firstly, the descriptiveness of the studied model, which allows for capturing multiple real-life scenarios, makes our method for computing equilibrium strategies an excellent candidate for integration into decision-support systems. Secondly, the structured nature of the experimental results offers a promising avenue for testing hypotheses regarding the properties of equilibria, which may aid in obtaining further theoretical insights into the model.

\section{Acknowledgments}
The author thanks Marcin Dziubiński for the discussions and his valuable remarks. Stanisław Kaźmierowski is supported by the European Union (ERC, PRO-DEMOCRATIC, 101076570). Views and opinions expressed are however those of the author only and do not necessarily reflect those of the European Union or the European Research Council. Neither the European Union nor the granting authority can be
held responsible for them.

\centering
\includegraphics[width=0.7\columnwidth]{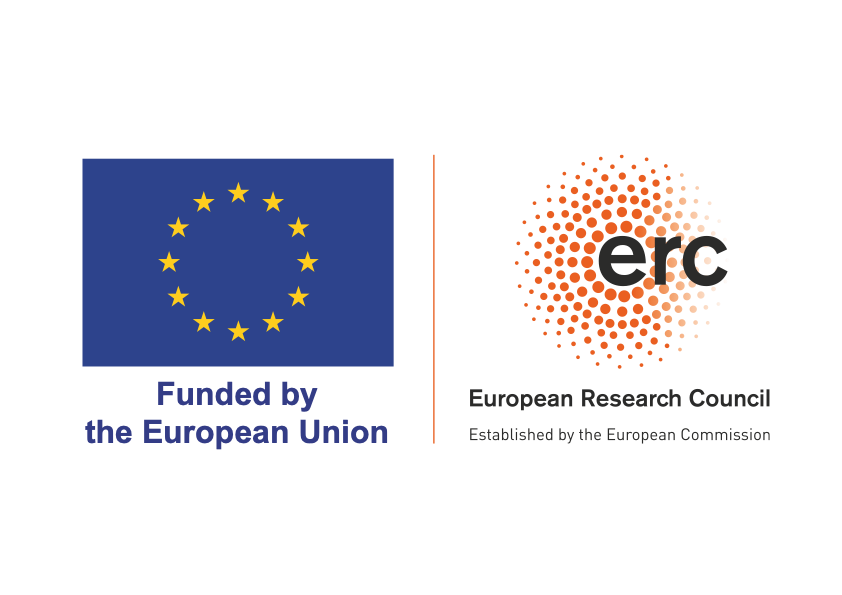}

\bibliography{paper}

\end{document}